\newtheorem{problem}[theorem]{Problem}
\newcommand{\etal}{{\it{et al.}}\xspace}
\newcommand{\R}{\ensuremath{\mathbb R}}
\newcommand{\nice}{{smooth}\xspace}
\newcommand{\RR}{\ensuremath{\mathcal R}\xspace}
\def\from{\mathrel\subset\mkern-10mu\joinrel\sim}
\newcommand{\myremark}[4]{\textcolor{blue}{\textsc{#1 #2:}} \textcolor{#4}{\textsf{#3}}}
\renewcommand{\myremark}[4]{}
\newcommand{\ben}    [2][says]{\myremark{Ben}    {#1}{#2}{JungleGreen}}
\newcommand{\maarten}[2][says]{\myremark{Maarten}{#1}{#2}{WildStrawberry}}
\newcommand{\remove}[1]{}
\title {Preprocessing Uncertain Data into Supersequences for Sorting and Gaps}
\author{Maarten L\"offler}
    {Department of Information and Computing Sciences; Utrecht University, the Netherlands}
    {m.loffler@uu.nl}%
    {https://orcid.org/0009-0001-9403-8856}%
    {}
   \author{Benjamin Raichel}{Department of Computer Science;
      University of Texas at Dallas, USA \and
      \url{http://utdallas.edu/\string~benjamin.raichel} }
   {benjamin.raichel@utdallas.edu}%
   {{https://orcid.org/0000-0001-6584-4843}}%
   {Work on this paper was partially supported by NSF CAREER Award
      1750780 and NSF Award 2311179.}
\authorrunning{M. L\"offler and B. Raichel} 
\keywords{uncertainty, preprocessing, sorting, smallest gap, largest gap} 
\begin{document}

\maketitle

\begin{abstract}
In the preprocessing framework for dealing with uncertain data, one is given a set of regions that one is allowed to preprocess to create some auxiliary structure such that when a realization of these regions is given, consisting of one point per region, this auxiliary structure can be used to reconstruct some desired output structure more efficiently than would have been possible without preprocessing. The framework has been successfully applied to several, mostly geometric, computational problems.

In this note, we propose using a supersequence of input items as the auxiliary structure, and explore its potential on the problems of sorting and computing the smallest or largest gap in a set of numbers.
That is, our uncertainty regions are intervals on the real line, and in the preprocessing phase we output a supersequence of the intervals such that the sorted order / smallest gap / largest gap of any realization is a subsequence of this sequence. 

We argue that supersequences are simpler than specialized auxiliary structures developed in previous work.
An advantage of using supersequences as the auxiliary structures is that it allows us to decouple the preprocessing phase from the reconstruction phase in a stronger sense than was possible in previous work, resulting in two separate algorithmic problems for which different solutions may be combined to obtain known and new results. We identify one key open problem which we believe is of independent interest.
\end{abstract}


\section{Introduction}

\subsection{Preprocessing framework} 
The {\em preprocessing framework} for dealing with data uncertainty was initially proposed by Held and Mitchell \cite{held2008triangulating} in the context of triangulating a point set in the plane. In this framework, we have a set $\RR = \{R_1, R_2, \ldots, R_n\}$ of {\em regions}, often in $\R^2$, and a point set $P = \{p_1, p_2, \ldots, p_n\}$ with $p_i \in R_i$
(we also write $P \from \RR$).
This model has two consecutive phases: a preprocessing phase, followed by a reconstruction phase. In the preprocessing phase we have access only to $\RR$ and we typically want to preprocess $\RR$ in $O(n \log n)$ time to create some linear-size auxiliary data structure which we will denote by $\Xi$. In the reconstruction phase, we have access to $P$ and we want to construct a desired output structure $S(P)$ on $P$ using $\Xi$ faster than would be possible otherwise.
Figure~\ref {fig:intro-prep} illustrates the model for the problem of {\em sorting}.

L{\"o}ffler and Snoeyink~\cite{loffler2010delaunay} were the first to use this model as a way to deal with data uncertainty: one may interpret the regions $\RR$ as {\em imprecise} points, and the points in $P$ as their true (initially unknown) locations. 
This interpretation makes sense in settings where approximate locations are cheap to obtain, but precise locations are costly or time-consuming, or when precise locations are not available at all but a large number of samples are taken from the distribution of all possible realizations for e.g. statistical analysis.
Various problems in computational geometry have been revisited in this framework~\cite {buchin2011delaunay,devillers2011delaunay,ezra2013convex,loffler2013unions,van2010preprocessing}.

\begin {figure}
 \begin{center}
  \includegraphics[width=\textwidth]{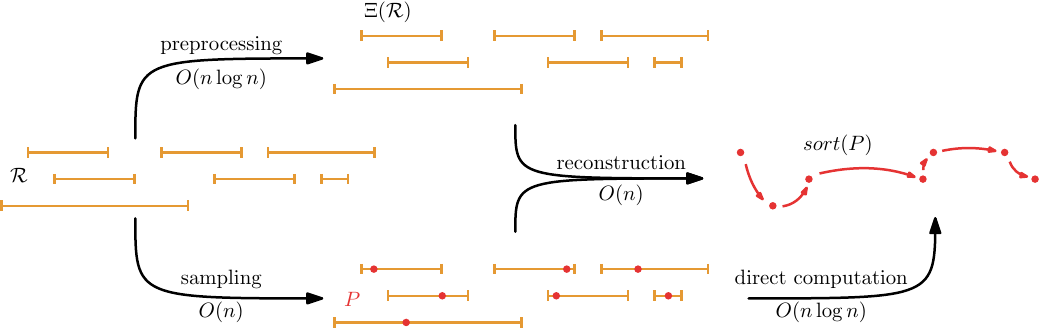}
 \end{center}
 \caption {A set of intervals $\RR$ of constant {\em ply} can be preprocessed into an auxiliary structure $\Xi(\RR)$ in $O(n\log n)$ time, such that the sorted order of a set of points $P$ that respects $\RR$ can be computed in linear time using $\Xi(\RR)$ (compared to $\Theta(n \log n)$ time without preprocessing)~\cite {BLMM11,held2008triangulating,hkls-paip-19}.}
 \label {fig:intro-prep}
\end {figure}

Arguably, many results using this framework are somewhat complicated. Part of the reason is the need for two algorithmic phases and the existence of the auxiliary data structure $\Xi$, which is something completely different in each paper and often does not have a clear intrinsic value. This makes it hard or impossible to separate the preprocessing phase from the reconstruction phase; they only make sense when viewed as a whole.

In this work, we explore a more restricted class of auxiliary structure: those where $\Xi$ is a {\em sequence} of elements from $\RR$ (possibly containing duplicated or emitted elements).
In principle, it is applicable to any computational problem where the output structure $S(P)$ is an ordered subset of $P$; that is, $S(P) = \langle p_{i_1}, p_{i_2}, \ldots, p_{i_s} \rangle$ for some indices $\{i_1, i_2, \ldots, i_s\} \subseteq [n]$.
For such problems, an attractive option to use for the auxiliary structure $\Xi$ is a \emph {supersequence} of the regions corresponding to $S$; that is, $\Xi$ is a sequence of (possibly reccuring) elements of $\RR$ with the guarantee that, no matter where the true points $P$ lie in their regions, the sequence of elements of $P$ which we would obtain by replacing the regions in $\RR$ by their points will always contain $S(P)$ as a subsequence.
Then, after replacing the regions by their realization, we only need to determine for each element in the sequence whether it should be output or skipped, given the promise that the desired output is already in the correct order: a clean computational problem which we believe has intrinsic value independent of the preprocessing framework.

In this work, we explore this idea in arguably the most fundamental computational problem for which the output is a sequence of input elements: sorting, as well the two closely related problems of finding the smallest and largest gaps in (the sorted order of) a set of unsorted numbers.
We believe this approach has several benefits:
\ben{Somewhere in this section we should be clear that our solution is not "optimal" and make a forward ref to the remark I added. In particular below we need to make clearer the distinction between sequence and supersequence benefits.}
\begin {itemize}
  \item A sequence is arguably a simpler and more meaningful structure than the structures used in previous work;
  \item the same concept is potentially applicable to many different problems where the output is a sequence of input elements, e.g. shortest path in graphs or convex hulls in the plane~\cite{2d-arxiv}.
  \maarten {For future conference submission: discuss both that sorting is a useful subroutine (in addition to being interesting by itself), separately from the statement that the concept of supersequences itself is interesting in other contexts.}
  \item the more natural auxiliary structure makes our preprocessing and reconstruction algorithms also of independent interest, outside of the framework;
  \item the supersequence viewpoint also naturally allows us to incorporate sublinear reconstruction times in certain settings, something that was not possible in most earlier work and has only been explored more recently.
\end {itemize}

\subsection{Sequences and sorting}
Given a set of intervals $\RR$, our goal will be to produce a sequence $\Xi(\RR)$ that is ``close to being sorted'', no matter where the true points are. We are particularly interested in sequences that are guaranteed to be a {\em supersequence} of the true points.\footnote {As most work in the preprocessing model is on higher-dimensional problems, but we focus on one-dimensional problems, we will also use the term ``point'' to refer to values in $\R$ (e.g. 1-dimensional points), and we use the terms ``point'', ``number'', and ``value'' interchangebly.}
Such a sequence always exists, but of course its length will depend on the amount of overlap of the intervals. In the worst case, when all intervals overlap each other, all sorted orders are possible.

A {\em universal word} is a string that contains as subsequences all permutations of its distinct characters.
The question of constructing the shortest universal word was first posed by Knuth and attributed to Karp~\cite {10.5555/891957}.
The length of known shortest universal words are in \cite {oeis/A062714}.
The best known upper bound is quadratic and is due to Tan~\cite {tan2022skiplettersshortsupersequence}; an earlier and more expressable quadratic bound of $n^2 - 7/3n + 19/3$ was given by Radomirovic~\cite {Radomirovic2012ACO}.
As a further benefit, his proof is constructive and also gives an $O(n^2)$ algorithm to construct such a word.
  Uzna\'{n}ski shows that testing whether a given string is a universal word is coNP-complete~\cite {u2015permutationssupersequenceconpcomplete}.

Clearly, a sequence $\Xi$ of quadratic length defeats the purpose of preprocessing, so for intervals that overlap too much a supersequence is not a viable option. On the other hand, many classical results in the preprocessing framework assume bounded {\em ply} (maximum overlap of the regions; see Section~\ref {sec:definitions} for formal definitions), and we will show in this note that for sets of intervals with constant ply, supersequences of linear size always do exist.

\begin {figure}
 \begin{center}
  \includegraphics[width=\textwidth]{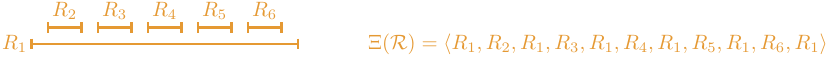}
 \end{center}
 \caption {A set of intervals $\RR$ of ply 2, and a supersequence that contains most elements only once, but one element a linear number of times.}
 \label {fig:intro-many}
\end {figure}

It is worth noting that the notion of a linear-size supersequence is flexible enough to allow {\em some} intervals to appear in the sequence very often, even a linear number of times (which in fact is sometimes necessary; see Figure~\ref {fig:intro-many}), as long as the total sequence length is not too large.

Once we have a supersequence, in the reconstruction phase, the intervals are replaced by the true locations of the points. This will now give us a sequence of points (with repeated instances) which comes with the promise that some subsequence of it is the sorted sequence of input points. We only need to find out which points belong to this subsequence.
The problem of finding a sorted sequence in a supersequence is closely related to the {\em longest increasing subsequence} problem. 
Fredman shows that the problem has a $\Omega (n \log n)$ lower bound~\cite{FREDMAN197529}. There is also a matching $O(n\log n)$ dynamic programming algorithm, which is commonly taught in advanced algorithms courses. 
However, our reconstruction problem has additional structure which makes it potentially easier.
We show in this note that on a Word RAM machine, it is indeed easier, and we can solve it in linear time. 
Whether the same is possible on a Real RAM is left as an open problem (see Section~\ref {sec:limitations}).

\subsection{Sublinear reconstruction}
Traditionally, the aim in the preprocessing model has been to achieve reconstruction times that are faster than computing a solution from scratch, with the understanding that there is a natural lower bound of $\Omega(n)$ time to reconstruct $S(P)$, since even replacing each region of $\RR$ with the corresponding point in $P$ will take linear time.
However, for problems in which the output is an ordered sequence of points, this reasoning is not entirely satisfactory, for two reasons:
\begin {enumerate}
  \item it is possible that not all elements of $P$ appear in $S(P)$; in this case it is not a priori clear that we need to spend time retrieving points that do not need to be output;
  \item even for those points that do appear in $S(P)$, in some applications one might be happy knowing just the order and not the exact points; if this order is already clear from $\RR$ there is again a priori no reason to believe we necessarily need to spend even $|S(P)|$ time in the reconstruction phase.
\end {enumerate}
Motivated by these observations, van der Hoog \etal~\cite {hkls-paip-19, hkls-pippf-22} have recently explored how the preprocessing model may be modified to allow for such sublinear reconstruction; 
for instance, for sorting they achieve reconstruction time proportional to the entropy of the interval graph, which can be less than $n$ if many intervals are isolated.\footnote {Note that to achieve reconstruction time proportional to the entropy, it is necessary to use a more powerful model than the traditional preprocessing framework, in which computation and retrieval can be mixed - i.e., one is able to retrieve a point, and depending on its location, inform how the computation continues and what other points should be retrieved.} 
In order to support sublinear reconstruction, van der Hoog \etal introduce an additional phase to the preprocessing framework, see Section~\ref{sec:1d-sub} for more details.

In contrast, a supersequence as an auxiliary structure fairly naturally allows for sublinear reconstruction, by either simply omitting elements of $\RR$ that are certainly not in the ouput, or marking those (sequences of) elements of $\RR$ that are guaranteed to be in the correct place in the sequence and need not be inspected during reconstruction.

\subsection{Smallest and largest gap}
In this note, we will demonstrate both types of sublinear reconstruction time by looking at the related problems of {\em sorting} and finding the {\em minimum} and {\em maximum gap} in a set of numbers (i.e., the smallest and largest difference between any two consecutive numbers in the sorted order).
Clearly, both the minimum and maximum gap can be computed in $O(n \log n)$ time by first sorting a set of numbers and then scanning the resulting list. For the smallest gap, a corresponding lower bound follows from element distinctness~\cite {Grigoriev}, while for the largest gap, the situation is more subtle: it has long been known that e.g. on a Real RAM where the use of the {\em floor} function is allowed, the largest gap can be computed in linear time~\cite {10.1007/3-540-62592-5_63,Gonzales}; on the other hand, current understanding is that the floor function should be disallowed for a well-defined Real RAM computation model~\cite {ehm22,kls-sehsr-16}.

In any case, clearly, any results for {\em sorting} in the preprocessing framework which achieve linear-time reconstruction of the sorted order, trivially imply corresponding results for both the smallest and largest gaps. However, when we aim for {\em sublinear} reconstruction, differences begin to appear.
Clearly, in the problem of sorting, all input elements also appear in the output, so sublinear reconstruction of the first type will not be possible. On the other hand, while all intervals that do not intersect any other intervals can be safely sorted without needing to retrieve their actual values, the same is not true for the gap problems. Hence, the problems are a priori no longer comparable.

\subsection{Organization}

The remainder of this paper is organized as follows.

In Section~\ref{sec:prelim} we introduce the necessary definitions to formally state our results.
We also state the main open problem that would close the gaps in our results (see below).

In Section~\ref{sec:1d-prep} we study the problem of preprocessing a set of intervals for sorting. Our main result here is an algorithm to produce a supersequence of intervals that is guaranteed to contain any instance in sorted order. When the intervals all have the same length, we show we can additionally achieve that the sequence is {\em \nice}.

In Section~\ref {sec:1d-rec}, we then consider the corresponding problem of reconstructing the sorted order from such a sequence. In particular, we show that it is always possible to recover the sorted order from a {\em \nice} supersequence in linear time on a Real RAM. For general sequences this is also possible in the Word RAM model of computation; whether the same is true on the Real RAM (and thus whether smoothness is necessary) is our main open question.

In Section~\ref {sec:sublinear}, we explore how the idea of using supersequences can be combined with sublinear reconstruction times, and develop our results for sorting, and for computing the smallest and largest gaps in a set of numbers.


\section{Preliminaries and Results}
\label {sec:prelim}

In this section we formally introduce the concepts and definitions necessary to state our problems and results.

\subsection {Definitions}
\label {sec:definitions}

\begin {definition}
  Let $\RR$ be a set of $n$ intervals.
  A {\em sorting-supersequence} of $\RR$ is a sequence $\Xi$ of (possibly reoccurring) intervals of $\RR$
  such that,
  for any point set $P \from \RR$, the sorted order of $P$ is a subsequence of $\Xi$.
\end {definition}

While this definition works for any set of intervals, it is easy to see that such a sequence is not useful for, e.g., a set of $n$ intervals that all coincide. Therefore, we will consider some natural restrictions on $\RR$.

\begin {definition}
The {\em ply} $\Delta$ of a set of intervals $\RR$ is the maximum over all points $p \in \R$ over the number of intervals that contain $p$:
  \[
    \Delta = \max_{p \in \R} \left| \left\{ I \in \RR \mid p \in I \right\} \right|
  \]
\end {definition}

We will focus on unit intervals parameterized by their ply $\Delta$ in this work.
Furthermore, we will consider an additional restriction on the sequences, and we will show that such a restricted sequence can both be computed on a set of  unit intervals of ply $\Delta$, and used to recover the sorted order; whether this restriction is necessary is left as an open problem (and may depend on the model of computation; refer to Section~\ref {sec:1d-rec}).

\begin {definition}
  A sequence of values $X=x_1,\ldots, x_n$ is {\em $(\alpha, \beta)$-\nice} if: 
  \begin {itemize}
    \item for any two elements $x_i, x_j \in X$ such that $i < j$ but $x_i > x_j$, we have $x_i - x_j \le \alpha$, and
    \hfill (distance property)
    \item for any interval $J \subset \R$ with $|J| > 1$, there are $\leq \beta |J|$ distinct elements from $X$ in $J$.
    \hfill (packing property)
  \end {itemize}
\end {definition}

We will also say a sorting-supersequence $\Xi$ is {\em $(\alpha, \beta)$-\nice} if the corresponding sequence of points (values) is {\em $(\alpha, \beta)$-\nice} for any $P \from \RR$.

\subsection{Problem statement \& results}

We can now formally state the problem of preprocessing intervals for sorting.

\begin{problem}\label{prob:1d}
Let $\RR$ be a set of $n$ intervals. 
Can we construct a supersequence $\Xi$ of $\RR$ such that given any $P\from \RR$ we can sort $P$ in $O(n)$ time by using $\Xi$?
\end{problem}

As mentioned, some restrictions on the intervals will be necessary for this to be possible.
We obtain the following specific preprocessing results.

\begin {theorem} \label{thm:mainunitintervals-preprocessing}
  Let $\RR$ be a set of $n$ unit intervals of ply $\Delta$.
  There exists a $(3, 2\Delta)$-\nice sorting-supersequence $\Xi$ of $\RR$ of size $O(n\Delta)$.
  Furthermore, such a sequence can be computed in $O(n (\Delta+\log n))$ time.
\end {theorem}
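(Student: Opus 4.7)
The plan is to sort the intervals by left endpoint so that $l_1 \le \cdots \le l_n$, and for each $i$ let $O_i$ be the collection of intervals overlapping $I_i$ (including $I_i$ itself), listed within $O_i$ in sorted order by left endpoint; the supersequence $\Xi$ will be the concatenation $O_1 O_2 \cdots O_n$. Because the intervals are unit-length and $\RR$ has ply $\Delta$, a standard midpoint/ply argument bounds $|O_i| \le 2\Delta$: the members of $O_i$ have left endpoints in $(l_i - 1, l_i + 1)$, a length-two window that can be split into two unit sub-windows, and any set of intervals whose left endpoints lie in a common unit sub-window shares a common point, so ply $\Delta$ caps each sub-window at $\Delta$. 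Summing yields $|\Xi| \le 2n\Delta$. Sorting takes $O(n \log n)$ time, and a single left-to-right sweep over the sorted intervals builds each $O_i$ in time proportional to its size, giving the claimed $O(n(\Delta + \log n))$ total.

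To prove that $\Xi$ is a sorting-supersequence, I would fix an arbitrary realization $P \from \RR$ with sorted order $\pi$ and analyze the greedy embedding that, at step $t$, places $\pi(t)$ at the earliest occurrence of $I_{\pi(t)}$ in $\Xi$ strictly after the previous placement. The key invariant, proved by induction on $t$, is that the block $O_{c(t)}$ containing the $t$-th placement satisfies $l_{c(t)} \le p_{\pi(t)}$. Given this invariant at step $t$, we have $l_{c(t)} \le p_{\pi(t)} < p_{\pi(t+1)} \le l_{\pi(t+1)} + 1$, so the block indexed by $r(t+1) := \max\{j : l_j \le p_{\pi(t+1)}\}$ always contains $I_{\pi(t+1)}$ (since $I_{r(t+1)}$ overlaps $I_{\pi(t+1)}$) and satisfies $r(t+1) \ge c(t)$, guaranteeing that greedy can find a valid placement; careful tracking of positions within blocks then restores the invariant at step $t+1$.

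The smoothness check has two parts. For the distance property, an out-of-order pair $x_i, x_j$ (with $i < j$ but $x_i > x_j$) arises from intervals lying in blocks $O_a$ and $O_b$ with $a \le b$. Since these intervals overlap $I_a$ and $I_b$ respectively, their left endpoints lie in $(l_a - 1, l_a + 1)$ and $(l_b - 1, l_b + 1)$, forcing $x_i \le l_a + 2$ and $x_j \ge l_b - 1 \ge l_a - 1$, so $x_i - x_j \le 3$. For the packing property, distinct values in $X$ lying in $J$ correspond to distinct intervals of $\RR$ intersecting $J$, whose left endpoints lie in a window of length $|J| + 1$; a second application of the midpoint/ply bound gives the required $O(\Delta |J|)$ count in the regime $|J| > 1$. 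The main obstacle I anticipate is making the greedy-embedding argument airtight across all cases---especially the case in which $\pi(t+1) < \pi(t)$ and greedy must advance to a strictly later block, where one must verify both that an appropriately overlapping block is reachable and that the invariant $l_{c(t+1)} \le p_{\pi(t+1)}$ survives the advance.
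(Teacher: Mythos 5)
Your construction --- sort by left endpoint and concatenate, for each $I_i$, the block $O_i$ of all intervals overlapping $I_i$ --- is genuinely different from the paper's, and your length, running-time and smoothness accounting is essentially fine (the length even comes out as $2n\Delta$ rather than the paper's $4n\Delta$; to get the literal constant $\beta=2\Delta$ rather than $O(\Delta)$ you should count as the paper does). However, the heart of the theorem, namely that $\Xi=O_1O_2\cdots O_n$ is a sorting-supersequence, is not actually proved. Your greedy induction with invariant $l_{c(t)}\le p_{\pi(t)}$ collapses exactly in the case you yourself flag: when $\pi(t+1)<\pi(t)$ and $r(t+1)=c(t)$, the occurrence of $I_{\pi(t+1)}$ inside block $O_{r(t+1)}$ lies \emph{before} the current position (each block lists its intervals in index order), so the block you exhibit is unusable; greedy must advance to some block $c>r(t+1)$, and your argument shows neither that such a block containing $I_{\pi(t+1)}$ exists nor that the invariant survives --- indeed, by the definition of $r(t+1)$ every block strictly beyond it has left endpoint exceeding $p_{\pi(t+1)}$, so the invariant cannot be restored by this local case analysis. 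Since this is the only step that makes $\Xi$ a supersequence at all, the proof as written has a genuine gap.

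The good news is that your construction is correct and can be repaired by a short global argument instead of the greedy bookkeeping. Let $q_1\le\cdots\le q_n$ be the sorted realization and let $\sigma(t)$ be the index of the interval containing $q_t$. Claim: $I_{\sigma(t)}$ overlaps $I_t$ for every $t$. If $I_{\sigma(t)}$ were strictly left of $I_t$, then $q_1,\ldots,q_t< l_t$, so $t$ distinct intervals would have left endpoint below $l_t$, which is impossible; if it were strictly right of $I_t$, then $q_t,\ldots,q_n>l_t+1$, forcing $n-t+1$ distinct intervals with left endpoint above $l_t$, again impossible. Hence placing $q_t$ at the occurrence of $I_{\sigma(t)}$ inside block $O_t$ gives a valid embedding using one element per block, with strictly increasing positions, so $\Xi$ is a sorting-supersequence with no case analysis (and, since greedy is leftmost, this also shows greedy never fails and that your invariant in fact holds). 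For comparison, the paper proves the same theorem by a different route: it groups the intervals into overlapping windows $S_i'$ of at most $2\Delta$ mutually relevant intervals and concatenates Radomirovi\'c universal words of length $|S_i'|^2$ over these groups; your approach avoids the universal-word machinery entirely, but only becomes a proof once the embedding lemma above (or an equivalent global argument) is supplied.
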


\begin {theorem} \label {thm:mainintervals-preprocessing}
  Let $\RR$ be a set of $n$ intervals of ply $\Delta$. There exists a sorting-supersequence $\Xi$ of $\RR$ of size $O(n\Delta^2)$. Furthermore, such a sequence can be computed in $O(n(\Delta^2 + log n))$ time.
\end {theorem}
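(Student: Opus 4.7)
The plan is to combine two ingredients already surfaced in the paper: the arrangement of interval endpoints, which gives a natural ``atomic'' decomposition of the real line, and the universal words of Radomirovic cited in the introduction, which handle the unresolved within-cell ordering. The unit-interval argument does not apply directly because long intervals may contain many short ones, so the clean sorted structure of Theorem~\ref{thm:mainunitintervals-preprocessing} is lost; however, we no longer ask for a smooth supersequence, which buys us the flexibility to pay a universal-word factor per cell.

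Concretely, I would first sort the $2n$ endpoints of $\RR$ in $O(n \log n)$ time. These endpoints partition $\R$ into $O(n)$ maximal cells $C_1, C_2, \ldots$ such that on the interior of each cell the set $A_k \subseteq \RR$ of intervals covering the cell is constant, with $|A_k| \le \Delta$ by the ply bound. A standard left-to-right sweep computes the sets $A_k$ incrementally, with each endpoint triggering one insertion or deletion. Then, for each cell, I would emit a universal word over $A_k$: the simplest choice is the concatenation $(J_1, J_2, \ldots, J_{|A_k|})^{|A_k|}$, whose length is $|A_k|^2 \le \Delta^2$ and which, by a greedy argument that places the $i$-th symbol of any target permutation into the $i$-th block, contains every permutation of $A_k$ as a subsequence. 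The final supersequence $\Xi$ is the concatenation of these per-cell universal words in cell order.

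For the size and time bounds, summing $|A_k|^2$ over cells gives at most $\Delta \sum_k |A_k|$, and $\sum_k |A_k|$ counts (interval, covered cell) incidences, which is bounded by $n$ plus the number of (interval, contained endpoint) incidences, itself $O(n\Delta)$ by ply; hence $|\Xi| = O(n\Delta^2)$. The sweep and recording of the $A_k$ take $O(n\Delta)$ time, the universal words take $O(|A_k|^2)$ per cell for a total of $O(n\Delta^2)$, and sorting endpoints takes $O(n \log n)$, giving the claimed $O(n(\Delta^2 + \log n))$. For correctness, fix any $P \from \RR$ and its sorted order $p_{i_1} < \cdots < p_{i_n}$: each $p_{i_j}$ lies in a unique cell $C_{k(j)}$ (with endpoint ties broken consistently), and because its enclosing interval $I_{i_j}$ contains a point in that cell it must contain all of the cell, so $I_{i_j} \in A_{k(j)}$. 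Points in distinct cells are already separated in $\Xi$ in the right order, and each group of points sharing a cell $C_k$ induces some permutation of a subset of $A_k$, which by construction appears as a subsequence of the universal word written for $C_k$.

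I do not expect a significant obstacle here. The only delicate point is the bookkeeping at shared endpoints (where the active set jumps), but assigning each endpoint its own degenerate cell whose universal word has length at most $\Delta^2$ absorbs this into the same asymptotics. The construction is intentionally weaker than Theorem~\ref{thm:mainunitintervals-preprocessing}: we do not attempt to guarantee smoothness, which is what lets us tolerate the universal-word overhead and, in turn, forces the dependency on a possibly non-smooth reconstruction step; this tension is precisely the open problem flagged in Section~\ref{sec:1d-rec}.
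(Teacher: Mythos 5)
Your proposal is correct and follows essentially the same route as the paper: subdivide the line at the $O(n)$ interval endpoints, emit a universal word over the at most $\Delta$ intervals covering each cell, and concatenate, giving length $O(n\Delta^2)$ and $O(n(\Delta^2+\log n))$ time. The only cosmetic difference is that you use the trivial repeated-block universal word of length $|A_k|^2$ instead of invoking Lemma~\ref{lem:shortestsuperseqeunce}, which yields the same asymptotic bound, and you spell out the correctness and endpoint bookkeeping that the paper leaves implicit.
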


Note that once we obtain a sorting-supersequence and replace the regions by their true point, it is no longer meaningful to distinguish between unit intervals and non-unit intervals, or to talk about the ply of the intervals. This is a main feature of using supersequences: it allows us to state generic reconstruction algorithms that are agnostic of how their input sequences were obtained.

We obtain two main reconstruction results.

\begin {theorem} \label{thm:mainunitintervals-reconstruction}
Let $\RR$ be a set of $n$ unit intervals. Given an $(\alpha, \beta)$-\nice sorting-supersequence $\Xi$ of $\RR$, and given a point set $P \from \RR$, the sorted order of $P$ can be computed in $O(|\Xi|\log(\alpha \beta))$ time on a Real RAM. 
\end{theorem}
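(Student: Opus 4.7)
The plan is to observe first that the distinct values appearing in the substituted sequence $X = x_1, \ldots, x_{|\Xi|}$ (obtained by replacing each interval in $\Xi$ by its assigned point of $P$) are exactly the elements of $P$: every $p \in P$ appears in $X$ because the sorted order of $P$ is a subsequence of $\Xi$, and every value in $X$ is by construction some point of $P$. Thus the task reduces to deduplicating $X$ and outputting its distinct values in increasing order, using the $(\alpha,\beta)$-smoothness of $X$ to do so in near-linear time.

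The algorithm I would run sweeps $X$ from left to right while maintaining a balanced binary search tree $T$ of the distinct values seen so far that have not yet been committed to the output, together with the running maximum $M$. When processing $x_i$, I would update $M \gets \max(M, x_i)$, look up $x_i$ in $T$ and insert it if absent, and then repeatedly extract and output the minimum of $T$ as long as that minimum is strictly less than $M - \alpha$. After the scan, I would flush the remaining contents of $T$ in sorted order.

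Correctness rests on two observations, both coming directly from the distance property. First, if the running maximum $M$ was achieved at some earlier position $i_M$, then for every later $x_j$ we must have $x_j \ge M - \alpha$; hence any $v < M - \alpha$ currently in $T$ cannot be undercut by anything yet to come, so outputting it now is safe. Second, every $p_j \in P$ with $p_j < M - \alpha$ must already have appeared in $X$ by position $i$: otherwise, its first appearance at a position later than $i_M$ would, together with the occurrence of $M$, violate the distance property. Combining the two, at the moment we extract values $< M - \alpha$ from $T$ they form a correctly-ordered prefix of the yet-unreported portion of the sorted order of $P$.

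For the running-time bound, I would argue that $T$ only ever contains distinct values lying in the window $[M - \alpha, M]$, an interval whose length is at most $\max(\alpha, 1)$ after padding to satisfy the hypothesis $|J| > 1$ of the packing property; thus $|T| = O(\alpha \beta)$ throughout the sweep. Each BST operation costs $O(\log(\alpha\beta))$, and we perform $O(|\Xi|)$ lookups/insertions plus $|P| \le |\Xi|$ extractions, giving the claimed $O(|\Xi|\log(\alpha\beta))$ time; since we use only comparisons the algorithm runs on a Real RAM. I expect the main subtlety to be pinning down the correct inequality ($<$ rather than $\le$) at the flushing threshold and handling the edge case $\alpha < 1$ in the packing bound; both are minor but deserve careful bookkeeping.
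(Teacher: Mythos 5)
Your proposal is correct and achieves the stated bound, but it takes a genuinely different algorithmic route from the paper. The paper proves this theorem by reusing its word-RAM greedy (Lemma~\ref{lem:hash}): a monotone stack that pops items larger than the incoming value and discards duplicates, where smoothness enters only in the duplicate test --- a repeated value must lie within $\alpha$ of the running maximum, so by the packing property it is among the largest $O(\alpha\beta)$ distinct values seen, which are kept in a balanced BST (Lemma~\ref{lem:linearrecovery}). You instead dispense with the stack entirely: you keep one BST confined to the window $[M-\alpha, M]$ around the running maximum $M$, eagerly commit to the output every value that falls strictly more than $\alpha$ below $M$, and flush at the end; correctness is argued directly (nothing yet to come can undercut or equal a committed value, and everything below $M-\alpha$ has already appeared), rather than by the paper's induction along a hidden sorted subsequence. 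Each approach has its advantages: yours is a streaming algorithm with $O(\alpha\beta)$ working storage and a self-contained correctness argument, while the paper's is modular, since the identical stack algorithm also covers the non-smooth case on the word RAM (Theorem~\ref{thm:mainunitintervals-reconstruction-wordram}) with only the membership test swapped out. Two details you flag are worth confirming: the strict inequality at the threshold $M-\alpha$ is in fact necessary (a value exactly $\alpha$ below the maximum may legally recur later, and emitting it early would produce a duplicate), and the $|J|>1$ padding in the packing bound is a cosmetic issue that the paper's own proof also glosses over; neither constitutes a gap.
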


\begin {theorem} \label{thm:mainunitintervals-reconstruction-wordram}
Let $\RR$ be a set of $n$ unit intervals. Given a sorting-supersequence $\Xi$ of $\RR$, and given a point set $P \from \RR$, the sorted order of $P$ can be computed in $O(|\Xi|)$ time on a Word RAM. 
\end{theorem}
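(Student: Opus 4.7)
The plan is a three-phase algorithm combining hashing, Word RAM integer sorting, and greedy extraction. First, I would scan $\Xi$ once and use perfect hashing to assign each interval $R_i \in \RR$ a unique integer ID in $\{1, \ldots, n\}$, building in $O(|\Xi|)$ expected time an array $V[1..n]$ whose $i$-th entry stores the realized value $p_i$. This lets me convert each position of $\Xi$ to an (ID, value) pair in constant time, sidestepping the comparison-only restriction of the Real RAM that forced the smoothness assumption in Theorem~\ref{thm:mainunitintervals-reconstruction}.

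Second, I would sort the $n$ values in $V$ using Word RAM integer sorting. Under the standard assumption that a word holds $\Theta(\log n)$ bits, radix sort does this in $O(n)$ time, producing a permutation $\pi$ of $[n]$ with $p_{\pi(1)} < p_{\pi(2)} < \ldots < p_{\pi(n)}$. Storing $\pi$ as an array gives constant-time access to the ID that should come next in the sorted order.

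Third, I would perform a single left-to-right pass over $\Xi$, maintaining a counter $k$ initialized to $1$: at each position, look up the interval's ID and compare with $\pi(k)$; on equality, output that interval and increment $k$. The scan runs in $O(|\Xi|)$ time, and by the supersequence property of $\Xi$ it is guaranteed that $k$ reaches $n+1$ by the end, at which point the emitted subsequence $R_{\pi(1)}, \ldots, R_{\pi(n)}$ is exactly the sorted order of $P$.

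The hard part is ensuring the sorting step stays within the $O(|\Xi|)$ budget when $|\Xi|$ is close to $n$. With word size $\Theta(\log n)$ this is immediate via radix sort; in more permissive Word RAM models where deterministic integer sorting is $O(n \log \log n)$, the required slack $|\Xi| = \Omega(n \log \log n)$ would follow from the universal-word lower bound, which forces any sorting-supersequence to contain $\Omega(\Delta^2)$ elements per clique of $\Delta$ mutually overlapping intervals, so any instance with nontrivial overlap automatically supplies enough slack. The only remaining case is when all intervals are pairwise disjoint, but then $\Xi$ is already in sorted order and the sort step can be skipped entirely.
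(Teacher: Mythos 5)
Your plan hinges on Step 2: sorting the $n$ realized values with radix sort in $O(n)$ time ``under the standard assumption that a word holds $\Theta(\log n)$ bits.'' That assumption is doing all the work, and it is not available in general: on the Word RAM the word size $w$ satisfies $w \ge \log n$ but the input values are arbitrary $w$-bit quantities, and linear-time integer sorting in that setting is a long-standing open problem (the best known bounds are superlinear, e.g.\ $O(n\log\log n)$ deterministic). Worse, if your assumption did hold, the theorem would be vacuous --- you could sort $P$ directly in $O(n)$ time and would never need the supersequence $\Xi$ at all; the whole point of the statement is that $\Xi$ lets you beat sorting from scratch. Your fallback argument does not rescue this: the universal-word lower bound gives $\Omega(\Delta^2)$ length per clique of $\Delta$ mutually overlapping intervals, but with constant ply (e.g.\ $n/2$ disjoint overlapping pairs, $\Delta=2$) a sorting-supersequence has length $\Theta(n)$, there is ``nontrivial overlap,'' and yet $|\Xi| = \Theta(n)$ provides no $\Omega(n\log\log n)$ slack --- so your global sort blows the $O(|\Xi|)$ budget exactly in the regime the paper cares about. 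The dichotomy ``pairwise disjoint or enough slack'' is therefore false. (The perfect-hashing ID step is also unnecessary, since the paper assumes bidirectional pointers between $\Xi$, $\RR$ and $P$, but that is a minor point.)

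The missing idea is that you should never sort the values at all: the supersequence promise means the sorted order is already present as a subsequence of the value sequence obtained by substituting $P$ into $\Xi$. The paper exploits this with a one-pass greedy stack algorithm (Lemma~\ref{lem:hash}): process the values in sequence order, discard a value if it is already on the stack, otherwise pop until the top is smaller and push; an induction on the hidden sorted subsequence shows the stack ends up holding exactly the sorted distinct values. The only nontrivial operation is the duplicate test, which universal hashing supports in $O(1)$ expected time, giving $O(|\Xi|)$ expected time with no assumption on the word size beyond the ability to hash, and no dependence on smoothness. If you want to salvage your write-up, replace the integer-sorting phase with this membership-test-plus-stack argument.
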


Preprocessing and reconstruction algorithms may then be combined to obtain statements similar to that in the literature as required.
For instance,
when $\RR$ is a set of unit intervals of ply $\Delta$,  Theorem~\ref{thm:mainunitintervals-preprocessing} gives a $(3,2\Delta)$-\nice sorting-supersequence $\Xi$ of size $O(n\Delta)$. Combined with Theorem~\ref{thm:mainunitintervals-reconstruction}, we thus immediately have the following corollary, which matches known results for constant $\Delta$.

\begin{corollary}\label{cor:main}
Let $\RR$ be a set of $n$ unit intervals of ply $\Delta$.
There exists an auxiliary structure $\Xi$ such that for any $P \from \RR$, the sorted order of $P$ can be computed in $O(n \Delta\log(\Delta))$ time using $\Xi$.
\end {corollary}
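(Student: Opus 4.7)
The plan is to obtain the corollary as a direct composition of the preprocessing and reconstruction results already established, treating the supersequence itself as the auxiliary structure $\Xi$ promised by the statement. Concretely, I would first invoke Theorem~\ref{thm:mainunitintervals-preprocessing} on $\RR$ to construct, in $O(n(\Delta+\log n))$ time, a $(3,2\Delta)$-\nice sorting-supersequence $\Xi$ of size $|\Xi|=O(n\Delta)$. This $\Xi$ is precisely the data the corollary asks for; all that remains is to argue that it enables the claimed reconstruction bound.

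Next I would apply Theorem~\ref{thm:mainunitintervals-reconstruction} to $\Xi$ together with any realization $P\from\RR$. Since $\Xi$ is $(\alpha,\beta)$-\nice with $\alpha=3$ and $\beta=2\Delta$, that theorem yields the sorted order of $P$ in time
\[
    O\bigl(|\Xi|\log(\alpha\beta)\bigr)
    \;=\; O\bigl(n\Delta\cdot\log(6\Delta)\bigr)
    \;=\; O(n\Delta\log\Delta),
\]
which matches the bound in the corollary. (For $\Delta=O(1)$ the $\log\Delta$ factor disappears, recovering the standard linear-time constant-ply result promised in the paper's introduction.)

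There is essentially no obstacle here, by design: the statement is engineered to be a clean composition of two independently proved pieces, and the only thing to check is that the smoothness parameters produced by Theorem~\ref{thm:mainunitintervals-preprocessing} plug into Theorem~\ref{thm:mainunitintervals-reconstruction} to yield the stated runtime. If anything deserves attention in the write-up, it is emphasizing the modelling point the authors repeatedly stress, namely that the preprocessing and reconstruction phases interact only through $\Xi$; once $\Xi$ and its smoothness parameters are fixed, the reconstruction bound follows without any further reference to the original intervals or their ply.
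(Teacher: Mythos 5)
Your proposal is correct and is exactly the paper's own argument: the corollary is obtained by taking the $(3,2\Delta)$-\nice sorting-supersequence of size $O(n\Delta)$ from Theorem~\ref{thm:mainunitintervals-preprocessing} as $\Xi$ and plugging $\alpha=3$, $\beta=2\Delta$ into Theorem~\ref{thm:mainunitintervals-reconstruction} to get $O(n\Delta\log\Delta)$ reconstruction time. No differences worth noting.
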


\ben{Change above to specifically state the structure is a supersequence. Add remark below?}

\begin{remark}
If one allows $\Xi$ to be something other than a sorting-supersequence, then one can obtain better bounds than in Corollary~\ref{cor:main}. For example, let $\Xi$ be the sorted sequence of left interval endpoints. (So $\Xi$ is still a sequence, but not a  sorting-supersequence.) We can then sweep from left to right, maintaining a priority queue of the points corresponding to the currently intersected intervals, where the next event is the minimum of the next queue value and the next interval endpoint. As the intervals have ply $\Delta$, this queue has extraction cost $O(\log \Delta)$ and thus the total time to reconstruct the sorted order in $O(n\log \Delta)$.
\end{remark}

\subsection {Limitations \& open problems}
\label {sec:limitations}

In most of this work, we assume we are working in the Real RAM model of computation, which is consistent with earlier work in the preprocessing framework. (For a formal discussion of the Real RAM, refer to \cite{ehm22}.) However, in order to achieve linear reconstruction time in this model, we require our supersequences be \nice (defined below). 
In contrast, in Section~\ref{sec:wordram} we show we do not need the \nice restriction in the word RAM model, where universal hashing is allowed. 
Whether the restriction is necessary is the Ream RAM or not is unclear,
and we believe the answer to this question is of independent interest.
Hence, we formulate the following open problem:

\begin {problem} \label {prob:1d}
  Given a sequence of values $P$ in $\R$ (possibly with duplicates), and the guarantee that there exists a sorted subsequence $Q$ of $P$ such every distinct element in $P$ appears at least once in $Q$.
  Is it possible to sort $P$ in $o(n \log n)$ time?
\end {problem}

We note that this problem is a special case of the longest increasing subseqeunce problem, which has a $\Omega (n \log n)$ lower bound in the decision tree model~\cite{FREDMAN197529}.
However, the LIS lower bound may be more difficult, since there may be non-duplicated elements that are not in the output, whereas in our problem, if there are no duplicates, the sequence is already sorted, and we are done. Thus, if our problem has a superlinear lower bound, the presence of duplicates must play a crucial role, whereas this is not the case for LIS.

Another limitation of our work is the dependence on the ply $\Delta$. 
We assume that $\Delta$ is constant in this work (as is done in many other papers in the preprocessing framework), and under this assumption, our results are optimal.
We believe it is unlikely that the dependence on $\Delta$ can be improved when sticking rigidly to supersequences; yet, better dependence on $\Delta$ has been established in some prior work in the framework (though not explicitly for the problem of sorting). 
One promising avenue to explore in this direction is to relax the requirement that $\Xi$ is a supersequence of $P$, and instead require that $\Xi$ contains a subsequence which is ``close to being sorted'', e.g., has a small number of out-of-order pairs (where ``small'' must be some function of $\Delta$). Such a relaxation would allow generally for shorter sequences, but would need to be balanced by more complicated reconstruction algorithms.

\section{Preprocessing} \label {sec:1d-prep}

We will now consider in detail the problem of computing a sorting-supersequence from a set of intervals. We will first consider the special case of unit intervals, and then the more general case. In particular, when the intervals are unit intervals, we will be able to show additionally that the resulting sequences are {\em \nice}, which is not possible for the more general case.

\subsection {Preprocessing unit intervals}

We will use the following result.
\begin {lemma} [\cite{Radomirovic2012ACO}] \label {lem:shortestsuperseqeunce}
  The length of the shortest sequence on $m$ symbols which contains all permutations of $[1,m]$ as a subsequence is upper-bounded by $m^2$. Moreover, such a sequence of length $m^2$ can be computed in $O(m^2)$ time.
\end {lemma}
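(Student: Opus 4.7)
The plan is to exhibit an explicit construction and then verify the two claims (length and correctness) directly; the lemma only demands the weak bound $m^2$, so no delicate analysis is needed. Concretely, I would let $\sigma = \langle 1, 2, \ldots, m\rangle$ and take $\Xi^{\star}$ to be the concatenation of $m$ identical copies of $\sigma$. Writing out $\Xi^{\star}$ takes $O(m^2)$ time and its length is $m^2$, which settles the size bound and the running time at once.

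The remaining content is to show that every permutation $\pi = \langle \pi_1, \ldots, \pi_m\rangle$ of $[1,m]$ appears in $\Xi^{\star}$ as a subsequence. I would argue this via a greedy left-to-right embedding: maintain a pointer $p$ into $\Xi^{\star}$, initialized at position $1$; for $i = 1, \ldots, m$, advance $p$ to the first position at or after $p$ whose symbol equals $\pi_i$, record that position, and move $p$ one step further.

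The key invariant, which I would prove by induction on $i$, is that after placing $\pi_i$ the pointer $p$ lies in block number at most $i$. The base case is immediate: $\pi_1$ is always found within the first block. For the inductive step, suppose $p$ lies within block $k \leq i$ just before placing $\pi_{i+1}$; since each block contains every symbol of $[1,m]$, the next occurrence of $\pi_{i+1}$ lies either later in block $k$ or at some position within block $k+1$, so the block index increases by at most one. Hence after placing all $m$ symbols the pointer lies in block at most $m$, and the whole permutation fits inside the $m^2$ positions of $\Xi^{\star}$.

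There is no significant obstacle: the construction is elementary, and the only mildly non-trivial point is the one-block-per-symbol charging above, which is a standard greedy argument. Obtaining the tighter bound $m^2 - \tfrac{7}{3}m + \tfrac{19}{3}$ due to Radomirovi\'c would require a more delicate construction that reuses overlapping suffixes of the blocks, but this refinement is not needed for the statement of the lemma.
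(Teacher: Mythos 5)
Your proof is correct: the $m$-fold repetition of $\langle 1,2,\ldots,m\rangle$ has length exactly $m^2$, is written down in $O(m^2)$ time, and the greedy block-charging argument (each symbol of the permutation advances the embedding by at most one block, so $m$ symbols fit in $m$ blocks) is the standard and valid way to see it is a universal word. The only nitpick is the phrasing of your invariant: after placing $\pi_i$ and advancing the pointer one step, the pointer itself may already sit at the start of block $i+1$; the clean statement is that the \emph{position at which $\pi_i$ is embedded} lies in block at most $i$, which is what your inductive step actually uses, so nothing breaks. Note, however, that the paper does not prove this lemma at all: it is imported as a black box from Radomirovi\'c, whose more delicate construction gives the sharper bound $m^2 - \tfrac{7}{3}m + \tfrac{19}{3}$, with the paper merely remarking that the $O(m^2)$ construction time is implicit in that work. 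So your route is genuinely different in character: a short, self-contained, elementary construction that achieves the weaker bound the lemma literally states, which fully suffices for its use in the paper (it is applied with $m = O(\Delta)$, where only the asymptotic quadratic size matters), whereas the citation buys better lower-order terms and a connection to the literature on shortest universal words, at the cost of relying on an external, more intricate argument.
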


While Radomiroci\'c~\cite{Radomirovic2012ACO} does not discuss the running time, it is implicit in their construction. 
Ultimately we will apply it when $m$ is the ply, which we assume is bounded.
Using this result, we prove the following technical lemma.

\begin{lemma}\label{lem:interval-windows}
  Let $\RR$ be a set of $n$ unit intervals in $\R$ of ply $\Delta$.
  There is a $(3, 2\Delta)$-\nice sorting-supersequence $\Xi$ of $\RR$ with at most $4\Delta n$ intervals.
\end{lemma}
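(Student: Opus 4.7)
The plan is to tile $\R$ by unit slabs $S_i = [i, i+1)$ (indexed by integers $i$), use Lemma~\ref{lem:shortestsuperseqeunce} to get a short universal-permutation sequence for the intervals meeting each slab, and concatenate. First, for each $i$ I would let $A_i = \{I \in \RR : I \cap S_i \neq \emptyset\}$; a standard ply argument gives $|A_i| \le 2\Delta$, since every unit interval in $A_i$ has left endpoint in $(i-1, i+1)$, and those with left endpoint in $(i-1, i]$ all contain $i$ (at most $\Delta$ such) while those with left endpoint in $(i, i+1)$ all contain $i+1$ (at most $\Delta$ such). For each nonempty slab I would invoke Lemma~\ref{lem:shortestsuperseqeunce} on $A_i$ to obtain a sequence $\Xi_i$ of length at most $|A_i|^2$ containing every permutation of $A_i$ as a subsequence, and set $\Xi$ to be the concatenation of the $\Xi_i$ in increasing order of $i$.

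For the length bound I would use the key identity $|A_i|^2 \le 2\Delta \cdot |A_i|$, valid because $|A_i| \le 2\Delta$. Since each unit interval meets at most two slabs, $\sum_i |A_i| \le 2n$, giving $|\Xi| \le 2\Delta \sum_i |A_i| \le 4\Delta n$.

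For the supersequence property, fix any $P \from \RR$ and let $j(I)$ be the unique index with $p_I \in S_{j(I)}$; note $I \in A_{j(I)}$. The sorted order of $P$ reads out points slab-by-slab in increasing index order, and within slab $S_i$ it lists some permutation of $\{p_I : j(I) = i\}$, which is a permutation of a subset of $A_i$. Universality of $\Xi_i$ over $A_i$ lets this permutation occur as a subsequence of $\Xi_i$, so concatenating yields the sorted order as a subsequence of $\Xi$. For the distance property ($\alpha = 3$), intra-slab out-of-order pairs both take values in $(i-1, i+2)$ and so differ by less than $3$, and cross-slab out-of-order pairs from $S_a, S_b$ with $a < b$ satisfy $x < a+2$ and $y > b-1$, hence differ by less than $2$. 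For the packing property ($\beta = 2\Delta$), the distinct values in any $J \subset \R$ with $|J| > 1$ are in bijection with a subset of the intervals in $\RR$ meeting $J$, whose left endpoints lie in a range of length $|J| + 1$; by the ply assumption on unit intervals this subset has size at most $2\Delta|J|$.

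The main obstacle is not conceptual but arithmetic: the construction is a direct slab decomposition composed with Lemma~\ref{lem:shortestsuperseqeunce}. The two key bookkeeping facts are the collapse $|A_i|^2 \le 2\Delta|A_i|$, which turns the naive $\Theta(\Delta^2)$ per-slab bound into a total $O(\Delta n)$ via $\sum_i |A_i| \le 2n$, and the ply-based count of unit intervals meeting a length-$L$ interval for $L > 1$, which pins down the exact constants in $(\alpha, \beta) = (3, 2\Delta)$.
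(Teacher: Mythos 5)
Your construction is correct and achieves the same bounds, but via a genuinely different (and arguably simpler) decomposition than the paper's. The paper does not use a fixed grid: it builds adaptive windows from the intervals themselves --- $S_1$ is the leftmost interval together with everything meeting it, $S_{i+1}$ is seeded by the first interval not in $S_i$, and each group is augmented to $S_i'$ by adding those intervals of the next group that reach back into the window $W_i$ --- then applies Lemma~\ref{lem:shortestsuperseqeunce} to each $S_i'$ and concatenates. Correctness there rests on the claims that every pair of intersecting intervals lies in a common $S_i'$ and that the points falling in $W_i \setminus W_{i-1}$ have their intervals in $S_i'$, and the distance bound $\alpha=3$ is argued via intervals in $S_i'$ pairwise intersecting or sharing a common interval. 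Your unit-grid slabs make this bookkeeping trivial: each realized point lies in exactly one half-open slab and its interval belongs to that slab's group, so per-slab universality (plus the observation that a permutation of a subset of $A_i$ extends to a permutation of all of $A_i$, which you should state explicitly) immediately gives the supersequence property; the price --- an interval occurring in two consecutive groups --- is exactly what keeps $\sum_i |A_i| \le 2n$, and your collapse $|A_i|^2 \le 2\Delta|A_i|$ is a cleaner accounting than the paper's ``sum of squares is maximized when'' argument. Two boundary quibbles, neither fatal: an interval of $A_i$ can have left endpoint exactly $i-1$ (it then contains $i$, so $|A_i|\le 2\Delta$ still holds), and your last packing step (like the paper's own claim that at most $(|J|+1)\Delta$ unit intervals meet $J$) is slightly optimistic for $1 < |J| < 1.5$, where $\lfloor |J|\rfloor + 2$ blocks of $\Delta$ intervals can meet $J$; this slack is inherited from the paper's statement itself and is immaterial downstream, since reconstruction only uses an $O(\log(\alpha\beta))$ overhead.
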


\begin {figure} \label {fig:interval-windows}
  \includegraphics [width=\textwidth] {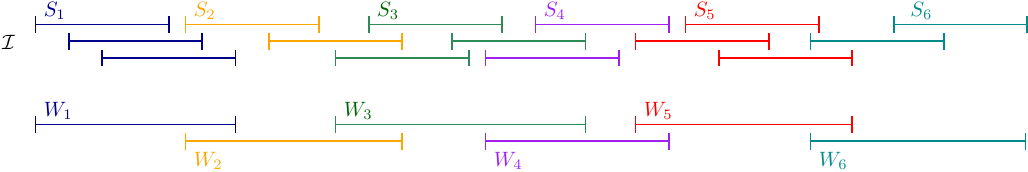}
  \caption {An example of a sequence of unit intervals of ply $\Delta=3$.}
\end {figure}

Note, to get a sequence of length $n \Delta^2$, we could simply apply Lemma~\ref {lem:shortestsuperseqeunce} to every cell of the subdivision induced by $\RR$. However, Lemma~\ref{lem:interval-windows} gives a more refined bound.

\begin {proof}
  Assume the intervals ${\RR} = \{I_1, \ldots, I_n\}$ are ordered and indexed from left to right (since they are unit intervals, this order is well-defined).
  Now, let $S_1 = \{I_i \mid I_i \cap I_1\neq \emptyset\}$ be the set of all intervals that intersect the leftmost interval (including the leftmost interval itself) and let $\eta_1 = \min \{i\mid I_i \cap I_1 = \emptyset\}$ be the index of the first interval not in $S_1$.
  We recursively define $S_{i+1}$ to be the set of all intervals that intersect $I_{\eta_{i}}$ and are not in $S_{i}$.
  Then, let $W_i = \bigcup_{I \in S_i} I$ be the window from the left endpoint of the leftmost interval in $S_i$ to the right endpoint of the rightmost interval in $S_i$.
  
  Now, define $S'_i$ to be the set of all intervals from $S_i$ unioned with all intervals from $S_{i+1}$ which intersect $W_i$.
  We claim that $|S'_i| \le 2 |S_i| - 1 \le 2\Delta - 1$. This is because any interval from $S_{i+1}$ intersecting $W_i$, that is not already in $S_i$, must contain the right endpoint of $W_i$, and as $\Delta$ is the ply, there can therefore be at most $\Delta - 1$ such intervals.
  We further claim that every pair of intervals that intersect each other is present in some set $S'_i$: indeed, an interval from $S_i$ cannot intersect an interval from $S_{i+2}$ since then the interval in $S_{i+2}$ would also intersect the leftmost interval of $S_{i+1}$ (but then it would be part of $S_{i+1}$).

  Let $k$ be the resulting number of $S_i'$ sets.
  We process these sets of intervals from left to right, and apply Lemma~\ref {lem:shortestsuperseqeunce} to $S'_i$ for $i = 1, \ldots, k$. Let the resulting $i$th sequence be $\Xi_i$. Then we simply concatenate the resulting sequences, to get a sequence $\Xi = \Xi_1 \cup \Xi_2 \cup \cdots \cup \Xi_{k}$.

 As argued above, $|S_i'|\leq 2\Delta$ and moreover $\sum_{i=1}^k |S_i'| \leq 2\sum_{i=1}^k |S_i|=2n$. By Lemma~\ref {lem:shortestsuperseqeunce}, $|\Xi_i|=|S_i'|^2$. Thus $|\Xi|$  is bounded by a sum of squares of values, i.e.\ the $|S_i'|$, which are each bounded by $2\Delta$ and sum to at most $2n$. It is not hard to see that such a sum of squares is maximized when $S_i'=2\Delta$ for all $i$ and $k= n/\Delta$. Thus the length of $\Xi$ is at most $n/\Delta \cdot 4\Delta^2 = 4n\Delta$.

  Now, let $P \from {\RR}$ be any realization, listed in sorted left to right order on the real line.  
  We divide the points in $P$ into $k$ groups: let $P_i \subset P$ be the set of all points that are contained in $W_i$ but not in $W_{i-1}$.
  Let $\Pi=\Pi(P)$ be the ordered string of intervals corresponding to the ordered points of $P$, and let $\Pi_i$ be the part of $\Pi$ corresponding to $P_i$.
 The string $\Pi_i$ is a substring of $\Xi_i$ (by Lemma~\ref {lem:shortestsuperseqeunce}).
 Therefore, the points in $P_i$ can be charged to the subsequence $\Xi_i$, thus proving $\Xi$ is a sorting-supersequence, as the $\Xi_i$ are concatenated in order.

  Finally, we need to argue that the resulting sequence $\Xi$ is $(3, 2\Delta)$-\nice.
  To argue that $\alpha = 3$, note that all intervals in $S_i$ intersect each other, and thus all intervals in $S'_i$ either intersect each other or intersect a common interval. In the first case, the distance between any two points in them is at most $2$; in the latter case they could have points at distance $3$ from each other.
  Moreover, the $S_i'$ are sorted on the real line (say by the left ends of their leftmost interval), and since we concatenate the $\Xi_i$ in this order, this bound for $\alpha$ on each $\Xi_i$ holds for $\Xi$ as well.
  To argue that $\beta = 2\Delta$, note that the largest number of unit intervals that can intersect an interval of length $J$ is $(|J| + 1) \cdot \Delta$, 
  where $(|J| + 1) \cdot \Delta\leq 2\Delta |J|$ when $|J|>1$.
\end {proof}

We are now ready to finish the preprocessing algorithm.

\begin {proof} [Proof of Theorem~\ref{thm:mainunitintervals-preprocessing}]
Observe that Lemma~\ref{lem:interval-windows} establishes the existence of the desired sorting-supersequence. Moreover, the proof of Lemma~\ref{lem:interval-windows} is constructive. First, sort the $n$ unit intervals in $O(n\log n)$ time, yielding the $S_i$ sets. $S_i'$ can then be computed (using this sorted ordering) from $S_i$ in $O(\Delta)$ time, and thus $O(\Delta n)$ time over all $i$. Finally, Lemma~\ref{lem:shortestsuperseqeunce} states that the time to compute $\Xi_i$ from $S_i'$ is proportional to $|\Xi_i|$ (which itself is $|S_i'|^2$). 
Thus $\Xi$ is computed from the $S_i'$ in $O(\Delta n)$ time overall, as it is the concatenation of the $\Xi_i$, and Lemma~\ref{lem:interval-windows} already argued $|\Xi|\leq 4\Delta n$.
\end {proof}

\subsection {Preprocessing general intervals}

Now, consider a general set of intervals $\RR$ of ply $\Delta$. In this case, we will generally not be able to produce a {\nice} sorting-supersequence: indeed, as illustrated in Figure~\ref {fig:intro-many}, there may be intervals whose point could appear arbitrarily far from any occurance in the sequence. Nonetheless, we can still obtain a sorting-supersequence of linear size.

\begin {proof} [Proof of Theorem~\ref{thm:mainintervals-preprocessing}]
  Since the number of intervals is $n$ and each interval has only two endpoints, the complexity of the subdivision induced by $\RR$ is still only linear.
  Hence, we can apply Lemma~\ref {lem:shortestsuperseqeunce} to every cell of this subdivision.
  Since each cell intersects at most $\Delta$ intervals, whis results in $n$ sequence of length $\Delta^2$ each, or, when concatenated, a single sequence of length $n\Delta^2$.
\end {proof}

\section {Reconstruction} \label {sec:1d-rec}

We now consider the problem of recovering the sorted order of a set of values from a supersequence as computed in Section~\ref {sec:1d-prep}. 
In this section we will consider this problem in the classical preprocessing framework, where the goal is to achieve linear-time reconstruction.
We show that for \nice sequences this is achievable in the Real RAM, while on the Word RAM we do not require the sequences to be \nice.

\label {sec:1d-rec-standard}

For the reconstruction phase, we are now given a sorting-supersequence $\Xi$ of $\RR$, and additionally a set of points (values) $P \from \RR$, and the goal is to efficiently compute the sorted order of $P$ using $\Xi$. (In particular, in linear time when $\Xi$ is $(O(1),O(1))$-\nice.)

We will assume that the regions in $\Xi$ come as pointers to the original regions in $\RR$, and that our sets $\RR$ and $P$ are equipped with bidirectional pointers.

The natural approach to the reconstruction problem is to replace each region in $\Xi$ by its corresponding point in $P$. This results now in a sequence $X$ of numbers, which comes with a guarantee that the sorted order is a subsequence of $X$. 

\begin {definition}
Let $X=x_1, \ldots, x_n$ be a sequence of $n$ numbers (possibly with duplicate numbers), which contains $k\leq n$ unique numbers. Then we call $X$ a \emph{hidden sorted} sequence if it contains a subsequence of length $k$ with all $k$ distinct numbers in sorted order.  
\end {definition}

First, we show that given such a sequence $X$, we can find the sorted subsequence in linear time when universal hashing is allowed. Subsequently, we show even without universal hashing linear time is possible when the sequence is appropriately smooth.

\subsection{Sorting hidden sorted sequences in linear time in the word RAM model}
\label {sec:wordram}

We first show that there is a natural greedy algorithm to sort hidden sorted sequences, which runs in expected linear time assuming universal hashing. 

\begin {lemma}\label{lem:hash}
Given a hidden sorted sequence $X=x_1, \ldots, x_n$ with $k$ unique numbers, there is a greedy algorithm which outputs a sorted subsequence of length $k$ containing all $k$ distinct numbers. This algorithm runs in $O(n)$ expected time if universal hashing is allowed, and otherwise it runs in $O(n \log k)$ time.
\end {lemma}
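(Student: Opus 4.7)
The plan is to design a stack-based greedy and verify its correctness via a single ``safe-pop'' lemma proved by a rank-chain argument.

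\emph{Algorithm.} Scan $X$ from left to right while maintaining a stack $S$ of distinct values sorted increasingly from bottom to top, together with a membership structure $M$: a hash set if universal hashing is allowed, and a balanced binary search tree keyed by value otherwise. For each $i$, if $x_i \in M$ skip it; otherwise repeatedly pop the top of $S$ while it exceeds $x_i$ (also updating $M$), then push $x_i$ onto $S$ (also updating $M$). At the end, output $S$ from bottom to top. Sortedness of $S$ is maintained by the popping rule, and since pushes happen in left-to-right order the positions stored in $S$ are strictly increasing, so the output is automatically a subsequence of $X$.

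\emph{Correctness.} It only remains to show that the final stack contains all $k$ distinct values. The key is the \emph{safe-pop lemma}: whenever the algorithm pops a value $u$ at time $i$, the value $u$ occurs again in $X$ at some position $>i$. Granted this, at time $\mathrm{last}(u)$ the value $u$ is either in $M$ already (and stays) or gets pushed, and any subsequent pop would demand an occurrence past $\mathrm{last}(u)$, contradicting its definition, so $u$ ends up on the final stack.

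\emph{The safe-pop lemma, which I expect to be the main obstacle.} Suppose for contradiction that $u = u_\ell$ is popped at time $i$ with $\mathrm{last}(u_\ell) \le i$, where $x_i = u_{m_0}$ and $m_0 < \ell$. Fix any hidden sorted witness $p_1 < \cdots < p_k$ for the sorted distinct values $u_1 < \cdots < u_k$. Then $p_{m_0} < p_\ell \le i$, so $x_i$ has an earlier occurrence, yet $x_i \notin M$ at time $i$, which forces it to have been pushed before and popped at some most-recent time $s_1 < i$ by a smaller value $v_1 = x_{s_1}$. Iterating, one builds a chain $v_0 = x_i, v_1, \ldots, v_r$ with strictly decreasing ranks $m_0 > m_1 > \cdots > m_r$ and decreasing pop times $s_1 > s_2 > \cdots > s_r$, terminating at $v_r$ precisely when $s_r$ is the first occurrence of $v_r$ in $X$ (so $v_r$ has no earlier push). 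The crucial fact is that $v_t$ cannot reappear in the open interval $(s_{t+1}, s_t)$ without re-entering $M$; combined with the strict rank ordering $p_{m_{t+1}} < p_{m_t}$, an induction from the base case $p_{m_0} < s_1$ yields $p_{m_t} < s_{t+1}$ for every $t < r$. Termination at $v_r$ forces $p_{m_r} \ge s_r$, which together with $p_{m_r} < p_{m_{r-1}} < s_r$ gives the contradiction $p_{m_r} < s_r \le p_{m_r}$. The delicate part, where I expect the argument to need the most care, is lining up the strict inductive inequality exactly with the rank ordering of the witness positions.

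\emph{Running time.} Each position triggers one membership check and at most one push, and every pop is charged to a prior push, so the total number of operations on $S$ and $M$ is $O(n)$. With $M$ a hash set every operation costs $O(1)$ expected, giving $O(n)$ expected time; with $M$ a balanced BST every operation costs $O(\log k)$, giving $O(n \log k)$ deterministic time.
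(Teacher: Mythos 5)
Your algorithm is exactly the paper's greedy (stack of distinct values kept sorted, skip an element if it is currently on the stack, otherwise pop all larger items and push), and your running-time accounting (each element causes one membership test and at most one push, pops charged to pushes, hash set vs.\ balanced tree over the at most $k$ stack items) matches the paper's. Where you genuinely diverge is the correctness argument. The paper fixes a hidden sorted witness $x_{i_1},\dots,x_{i_k}$ and proves directly, by induction on $j$, that after position $i_j$ the $j$ smallest values sit permanently at the bottom of the stack (they can never be popped, since any popper would be a smaller value that is already on the stack and hence skipped); at $j=k$ this is the whole output. You instead prove a ``safe-pop'' lemma --- a value popped at time $i$ must reoccur after $i$ --- via a backward chain of most-recent pop times with strictly decreasing ranks, using the witness positions $p_{m_t}$ to keep the chain alive and deriving a contradiction at its forced termination; from this, every value survives from its last occurrence onward. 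I checked the chain argument and it is sound: the base case $p_{m_0}<s_1$ follows from $p_{m_0}<p_\ell\le i$ together with the no-reappearance-in-$(s_1,i)$ fact, the inductive step $p_{m_{t+1}}<s_{t+2}$ follows the same way, and termination at a first occurrence yields $p_{m_r}<s_r\le p_{m_r}$. The trade-off: the paper's witness induction is shorter and avoids the delicate bookkeeping you flag, while your safe-pop lemma isolates a local invariant of the algorithm (it never discards a value that cannot return) that is somewhat more informative on its own, e.g.\ it immediately explains why the stack is correct after processing any prefix in which every value has already had its last occurrence. Both proofs use the hidden-sorted witness in an essential way, as they must, since the lemma fails without it.
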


\begin {proof}

Create a stack which initially contains just $x_1$. Now process the items in the sequence in order (starting at $x_2$). Let $x_i$ be the current item being processed.
If $x_i$ exists in the stack already then discard it. Otherwise, pop items off the stack until the top item is less than $x_i$, at which point we push $x_i$ onto the stack.%
\footnote{Note that the above algorithm is oblivious of the value $k$.}

Let $x_{i_1},\ldots,x_{i_k}$ be any subsequence of $X$ containing all $k$ unique items in sorted order. To argue correctness, we prove by induction that after $x_{i_j}$ has been processed by our algorithm, the bottom of the stack contains the $j$ smallest (unique) items in sorted order. (For $j<k$ there may be additional items above on the stack, but for $j=k$ this is not possible as by construction the stack contains sorted unique items.) For the base case when $j=1$, the claim trivially holds as then the smallest item has been processed (either when $x_{i_j}$ is processed or earlier), at which point the algorithm will place it at the bottom of the stack, and the algorithm will never remove it. Now consider some $j>1$. By induction, after $x_{i_{j-1}}$ was processed, the stack contains the $j-1$ smallest (unique) items in sorted order at the bottom, and these can never be removed by the algorithm. Thus as $x_{i_j}$ is the next smallest item, the same logic as in the base case implies that after $x_{i_j}$ is processed, it will be on the stack immediately after the first $j-1$ items, and will never be removed.

Assuming we store the current stack items in a hash table, it takes expected constant time to check if $x_i$ is already in the stack. 
Thus we get a linear time algorithm, since in each time step we are either inserting or deleting from the stack, and each item is inserted at most once. If hashing is not allowed, we can instead use a binary tree for the items in the stack (whose size is at most $k$), and thus the running time is $O(n\log k)$.
\end {proof}

With this result in place, we can now prove Theorem~\ref{thm:mainunitintervals-reconstruction-wordram}.

\begin {proof} [Proof of Theorem~\ref{thm:mainunitintervals-reconstruction-wordram}]
If $P\from\RR$ is a realization of $\RR$ and $\Xi$ is a sorting-supersequence of $\RR$, then, by definition, if we replace the regions from $\Xi$ by their corresponding points in $P$, we obtain a hidden sorted sequence of the same length as $\Xi$. Thus, using Lemma~\ref{lem:hash}, we can find the sorted order of $P$ in $O(|\Xi|)$ time, as required. 
\end {proof}

\subsection{Sorting \nice sequences}

For our purposes, it is sufficient to recover the sorted subsequence from a hidden sorted sequence that additionally is {\em $(\alpha, \beta$)-\nice}. We assume $\alpha$ and $\beta$ are constants, so the goal is to sort the sequence in time of the form $O(f(\alpha, \beta) \cdot n)$.

Note that if we convert the supersequence $\Xi$ as constructed in Section~\ref {sec:1d-prep} to their corresponding values in $P$, we get a sequence that is $(3, 2\Delta)$-\nice.

\begin {lemma} \label {lem:linearrecovery}
  Given a $(\alpha, \beta)$-\nice hidden sorted sequence $X=x_1, \ldots, x_n$ with $k$ unique numbers
  we can output a sorted subsequence of length $k$ containing all $k$ distinct numbers in $O(n \log(\alpha \beta))$ time in the Real RAM model.
\end {lemma}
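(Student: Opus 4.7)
The plan is to run the stack-based greedy algorithm of Lemma~\ref{lem:hash} verbatim---for each $x_i$, discard it if it already appears on the stack, and otherwise pop the stack until the top is less than $x_i$ and then push $x_i$---but to answer the ``already on the stack'' query with a small balanced BST $T$ tuned to the smoothness of $X$. Correctness of the underlying algorithm is already established by Lemma~\ref{lem:hash}, so the task reduces to designing $T$ so that every duplicate of $x_i$ is found by a search in $T$, while $|T| = O(\alpha\beta)$ at all times.

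The design rests on two consequences of the distance property. Let $t$ be the current top of the stack at the moment $x_i$ is processed. First, $x_i \ge t - \alpha$: otherwise $(t, x_i)$ would be an inversion of magnitude greater than $\alpha$, so any duplicate of $x_i$ on the stack must lie in the window $[t - \alpha, t]$. Second, suppose some element $s$ is on the stack and at an earlier moment the top was $t' > s + \alpha$. Then no future $x_j$ can equal $s$, since the pair $(t', x_j)$ would be an inversion of magnitude greater than $\alpha$; and no future $x_j$ can pop $s$, since such an $x_j$ would satisfy $x_j < s < t' - \alpha$, giving the same contradiction. Hence once the top has ever exceeded $s + \alpha$, the element $s$ is permanently \emph{safe}: it remains on the stack, cannot match any future value, and may be dropped from the auxiliary structure.

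Motivated by this, I would store the stack as a doubly linked list in sorted bottom-to-top order, and maintain alongside it a balanced BST $T$ holding exactly the not-yet-safe stack elements. By the two observations above, $T$ is contained in the value window $[t - \alpha, t]$ and contains every possible duplicate of $x_i$; by the packing property applied to an interval of length $\max(\alpha, 1)$ around $t$, $|T| = O(\alpha\beta)$. Processing $x_i$ then proceeds as follows: search $T$ for $x_i$ and discard $x_i$ on success; otherwise, pop from the stack and from $T$ all elements greater than $x_i$, push $x_i$ onto the stack and into $T$, and finally delete from $T$ every element smaller than $x_i - \alpha$ (these are exactly the elements that have just become safe; they remain on the linked list but are never touched again). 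After the sequence has been exhausted, the linked list holds the $k$ distinct values of $X$ in sorted order, which is returned as the output.

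For the running time, each element of $X$ causes one search in $T$; moreover each element is inserted into $T$ at most once (when it is first pushed onto the stack) and deleted from $T$ at most once (when it is popped or becomes safe). Thus $T$ sees $O(n)$ operations in total, each costing $O(\log|T|) = O(\log(\alpha\beta))$; the linked-list work is $O(n)$ at $O(1)$ per step; and the total is $O(n \log(\alpha\beta))$ on a Real RAM. The step I expect to be most delicate is the second observation---verifying that a duplicate of $x_i$ cannot be hiding in the safe portion of the stack---since it requires applying the distance property to a specific past top rather than the current one; once this is in hand, the data-structure bookkeeping is standard.
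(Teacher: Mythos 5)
Your proposal is correct, and at its core it follows the same strategy as the paper's proof: run the greedy stack algorithm of Lemma~\ref{lem:hash} unchanged, and use the two smoothness properties to shrink the duplicate test to a balanced BST of $O(\alpha\beta)$ values (the distance property confines any possible duplicate to a window of length $\alpha$ below the current maximum, and the packing property bounds the number of distinct values in such a window), giving $O(n\log(\alpha\beta))$ time. The difference is in what the small BST stores: the paper keeps the largest $O(\alpha\beta)$ values \emph{seen so far}, anchored at the maximum seen value, and phrases the test as ``discard $x_i$ if it has already occurred,'' whereas you keep exactly the not-yet-safe \emph{stack} elements, anchored at the current top, together with the extra argument that once some past top exceeds $s+\alpha$ the element $s$ can never again be popped or duplicated. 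That extra ``safe'' argument is precisely what licenses answering the exact stack-membership query of Lemma~\ref{lem:hash} with a structure of size $O(\alpha\beta)$, and it makes your variant handle cleanly the edge case of a value that is popped and later reoccurs (e.g.\ the $(1,2)$-\nice sequence $1,3,2,3$), where a literal ``discard if seen before'' test would wrongly drop the reoccurrence even though it must be re-pushed; in that respect your bookkeeping is, if anything, more careful than the paper's write-up. One tiny point: since the packing property is only stated for intervals of length greater than $1$, the bound on your BST is really $O((\alpha+1)\beta)$ rather than $\alpha\beta$ (you implicitly acknowledge this with $\max(\alpha,1)$); this does not affect the claimed $O(n\log(\alpha\beta))$ bound in the regime the lemma is used ($\alpha=3$, $\beta=2\Delta$), and the same slack is present in the paper's own argument.
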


\begin{proof}
Recall the algorithm of Lemma~\ref{lem:hash}, which greedily processes the items in $X$ in order, discarding $x_i$ if it has already occurred, and otherwise popping items off the stack until the top item is less than $x_i$, at which point we push $x_i$ onto the stack.  
We already argued this algorithm is correct, though its running time depends on how long it takes to determine whether $x_i$ has already occurred. 

The claim is that if $x_i$ has already occurred, then is must be one of the largest $\alpha \beta$ items seen so far. Assuming this claim is true, we can implement this algorithm in $O(n\log(\alpha\beta))$ time, as maintaining a balanced binary tree with the largest $\alpha\beta$ items seen, takes $O(\log(\alpha\beta))$ time per round (i.e.\ $O(n\log(\alpha\beta))$ overall), and the remaining stack operations are linear time overall. 

To prove the claim, fix any $x_i$, and let $x_j$ be the largest value item such that $j<i$. If $x_i>x_j$, the $x_i$ has not been seen before, and the claim trivially holds. So suppose that $x_i\leq x_j$. Since $X$ is $(\alpha,\beta)$-\nice, we have that $x_j-x_i\leq \alpha$. Thus if $x_i$ has occurred already, it must be within an interval of length $\alpha$ (ending at $x_j$), and there can be at most $\alpha \beta$ distinct points in such an interval, by the packing property of $(\alpha,\beta)$-\nice sequences, thus implying the claim.
\end{proof}

\begin {proof} [Proof of Theorem~\ref{thm:mainunitintervals-reconstruction}]
If $\Xi$ is an {\em $(\alpha, \beta)$-\nice} sorting-supersequence of $\RR$, then by definition for any $P\from\RR$, if we replace the regions from $\Xi$ by their corresponding points in $P$, then it results in an $(\alpha,\beta)$-\nice hidden sorted sequence. Thus Lemma~\ref{lem:linearrecovery}, directly states we can recover the sorted order of $P$ in $O(|\Xi| \log(\alpha\beta))$ time. 
\end {proof}

\section{Sublinear Reconstruction}
\label {sec:sublinear}

In this final section, we explore the suitability of using supersequences as auxiliery structures when combined with the recent trend of sublinear reconstruction.

The preprocessing framework was originally developed with linear-time reconstruction as the ultimate goal: since, during the reconstruction phase, as we need to spend linear time to replace each region in $\RR$ by the corresponding point in $P$, we cannot hope to be faster.
Van der Hoog~\etal~\cite{hkls-paip-19,hkls-pippf-22} argue that this viewpoint may be limiting, and that in some applications we may be content with an output structure that still contains some regions, as long as we are guaranteed that the true points in those regions are combinatorially in the correct location. They argue that the preprocessing framework may be adapted to allow for this behaviour, at the cost of introducing another phase: the reconstruction phase is formally separated into a first subphase (that can take sublinear time), in which the auxiliary structure $\Xi$ is transformed into another structure $\Xi'$ which is combinatorially equivalent to the desired output $S(P)$, and a second subphase in which $\Xi'$ is actually transformed into $S(P)$ in linear time, if so desired.

We show that the same behavior can be quite naturally obtained when the auxiliary structure $\Xi$ is a supersequence.

\begin {figure}
 \begin{center}
  \includegraphics[width=\textwidth]{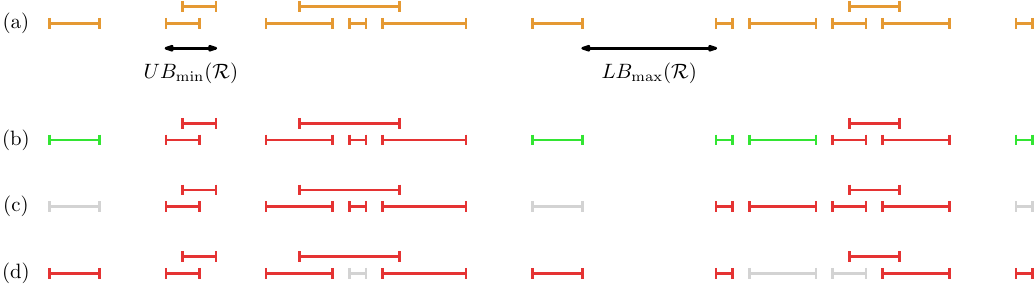}
 \end{center}
 \caption 
 { (a) A set of intervals $\RR$ of ply 2, with the upper bound on the smallest gap and the lower bound on the largest gap indicated.
   (b) The marked intervals (red) and unmarked intervals (green) for sorting.
   (c) The marked intervals (red) and omitted intervals (grey) for the smallest gap.
   (d) The marked intervals (red) and omitted intervals (grey) for the largest gap.
 }
 \label {fig:sublinear}
\end {figure}

\subsection{Sublinear reconstruction for sorting}
\label {sec:1d-sub}

As a pure supersequence of the input contains no information on which elements are necessary and which are not, its applicability is limited; nonetheless, even in this purest interpretation, we may obtain sublinear reconstruction for certain problems by simply omitting un-needed elements (see Section~\ref {sec:gaps}). For the purpose of sorting, however, this is not possible, and we require an annotated version of a sequence of input elements, in which is indicated which intervals will need to be retrieved and which not.
A {\em marked} sorting-supersequence is a sorting-supersequence in which all the intervals whose value needs to be retrieved are marked.
Hence, a standard sorting-supersequence is a marked supersequence in wich all items are marked. To achieve maximum benefit, we want to mark as few items as possible, since all non-marked items may be skipped during reconstruction.

\begin {definition}
  A {\em marked} sorting-supersequence $\Xi$ of $\RR$ is a sequence of (possibly recurring) intervals of $\RR$, some of which may be marked, and such that, for any point set $P \from \RR$, the sorted order of $P$ is a subsequence of $\Xi$, that contains all non-marked items of $\Xi$.
  Furthermore, each marked element in such a sequence stores a pointer to the next marked element.
\end {definition}

In the case of sorting, intervals are marked if the position in the sorted order depends on the location of the point in the interval. This happens exactly when the interval overlaps at least one other interval. An example is given in Figure~\ref {fig:sublinear} (b).

We adapt the preprocessing algorithm as follows. In $O(n \log n)$ time, we sort the intervals (by their left endpoint) as before. Now, in linear time, we scan through the intervals and detect any intervals that overlap at least one other interval. We mark these intervals, and collect clusters of marked intervals that are separated by at least one non-marked interval. We apply Lemma~\ref {lem:interval-windows} to each cluster separately, and concatenate the resulting sequences. 

In order to be able to quickly skip over unmarked intervals during the reconstruction phase, we can equip the resulting sorting-supersequence with pointers from each item to the next marked item. During the reconstruction phase we then also apply Lemma~\ref {lem:linearrecovery} separately to each subsequence of consecutive marked items, in time propertional to the total number of marked items in the list. This results in a mixed sequence of sorted points and marked intervals, which is guaranteed to be in the correct sorted order (this list would be called $\Xi'$ in the terminology of~\cite{hkls-paip-19}).

\subsection {Sublinear reconstruction for the smallest or largest gap}
\label {sec:gaps}

When the goal is not to sort the points, but to find the smallest gap, we may additionally potentially discard some input intervals of which we are sure they can never contribute to the desired quantity. To this end, we determine an upper bound on the size of the smallest gap, and a lower bound on the size of the largest gap, based on the intervals.

\begin {definition}
  The {\em smallest gap upper bound}, $UB_{\min}({\RR})$, of a set of intervals $\RR$, is the minimum over all pairs of intervals in $\RR$ of the maximum distance between any pair of points from those intervals.
\end {definition}

\begin {definition}
  The {\em largest gap lower bound}, $LB_{\max}({\RR})$, of a set of intervals $\RR$, is the maximum distance from the right endpoint of an interval to the left endpoint of the next interval in the sorted sequenece of all interval endpoints.
\end {definition}

These bounds can easily be computed in $O(n \log n)$ time by sorting the endpoints of the intervals in $\RR$. Then, using these bounds, we can classify the intervals into those which could potentially contribute to the smallest or largest gap (marked intervals), and those which cannot (omitted intervals); see Figure~\ref {fig:sublinear} (c) and (d). Note that for these problems there are generally no unmarked intervals. 

Testing which intervals should be omitted for the smallest gap is a simple scan: exactly those intervals that are further than $UB_{\min}({\RR})$ away from both neighbours can be omitted. This can be done in linear time, independent of the ply of $\RR$.

Testing which intervals should be omitted for the largest gap is slightly more complicated, but can also be done in linear time after separetely sorting the left and right endpoints of the intervals: an interval $I$ can be omitted if and only if the left endpoint of $I$ is closer than $LB_{\max}({\RR})$ to the next right endpoint of any other interval, and the right endpoint of $I$ is closer than $LB_{\max}({\RR})$ to the previous left endpoint of any other interval.

Once the unneeded intervals are omitted, we may proceed by preprocessing the remaining intervals as in Section~\ref {sec:1d-prep}. In the reconstruction phase, we then first obtain the sorted order of the points, and from this compute the smallest or largest gap, in time proportional to the number of marked intervals.

While these results are, admittedly, rather straightforward, it is exactly their simplicity \maarten {we might want to tone down the "simplicity" for anything other than another SOSA (or the corresponding ESA track?) submission - in particular in an arxiv version?} that showcases the potential of supersequences being used in the analysis of more complex computational problem on which the output is a subsequence of the input, such as e.g. shortest paths, convex hulls, orienteering problems, and more, in the context of the preprocessing framework for dealing with uncertain data.


\bibliographystyle{plainurl}
\bibliography{refs,geom,moregeom,maarten}

\end{document}